\documentclass[12pt]{article}
\usepackage{amsthm}
\newcommand{\bfm}[1]{\mbox{\boldmath{$#1$}}}
\usepackage{amssymb,amsfonts,amsmath}

\usepackage{graphicx}

\newcommand{\beq}{\begin{eqnarray}}
\newcommand{\eeq}{\end{eqnarray}}
\newcommand{\beqs}{\begin{eqnarray*}}
\newcommand{\eeqs}{\end{eqnarray*}}

\usepackage{graphics,epsfig}
\usepackage{color}

 \newtheorem{theorem}{Theorem}
\newtheorem{corollary}{Corollary}
\newtheorem{definition}{Definition}

\begin{document}

\title{Disassociation energies for the finite density $N$-body problem}
\author{D.J. Scheeres $<$scheeres@colorado.edu$>$ \\Smead Department of Aerospace Engineering Sciences\\University of Colorado Boulder}
\date{\today}
\maketitle

\begin{abstract}
This paper considers the energy required for collections of finite density bodies to undergo escape under internal gravitational interactions alone. As the level of the system energy is increased there are different combinations of components that can escape, until the total energy becomes positive, when the entire system can undergo mutual disruption. The results are also defined for bodies modeled as a continuum. These results provide rigorous constraints for the disruption of rubble pile asteroids when only considering gravitational interaction effects, with the energy provided by rotation of an initial collection of the system. These issues are considered for discrete particles in the $N$ body problem and for size distributions of discrete particles in the continuum limit.
\end{abstract}

\section{Introduction}

This paper explores the energy required for a finite density $N$-body problem to undergo self-disruption under its internal dynamics alone, meaning that some components of the body can escape to infinity without additional energy being added to the system. This problem has direct application to the evolution of rubble pile asteroids in the solar system, where there is ample evidence that these bodies may be slowly spun up over time until they have sufficient energy to fission, with some of the components eventually escaping from the system and forming asteroid pairs or clusters \cite{pravec_fission, Pravec_clusters}. The process of self-disruption and the limits for the escape of components has been studied previously for specific body types and scenarios \cite{scheeres_F2BP, scheeres_minE, F4BP_chapter, F3BP_scheeres}. The current work unifies some aspects of these other studies and extends these results to size distributions, accounts for porosity within a rubble pile body and respects the indivisible nature of individual ``boulders'' that may exist in rubble pile bodies. 

The main purpose of this paper is to provide a clear description of how a finite density $N$ body problem can be disaggregated and what the necessary energy levels are for such disaggregation to occur. The results can be used to build up a rigorous description for the formation of asteroid pairs and clusters, and to explore the limits on energy requirements for disruption. 

The paper is organized as follows. First, the basic model is presented for the finite density, or full, $N$ body problem and  specific terminology is developed for how different sub-sets of the Full $N$ body problem are defined. Next, the basic mechanics that control the evolution of these systems is reviewed. Following this, the energetics of the problem are discussed identifying the free energy, mutual potential energy, self-potential energy, cohesive energy and disruption energy. In addition the minimum energy configurations for a given subset of the Full $N$ body problem are defined. With these terms defined, Hill Stability is defined in the current context and a few theorems that relate the total energy of a system and the Hill Stability of certain sub-components of the problem are proven. Finally a few examples of how these theorems can be applied to develop energy constraints on the disruption of a full $N$ body problem for both discrete and continuous bodies is given.  

\section{Model}

The finite-density $N$-body problem is posed. It is assumed that all of the component bodies are rigid and can thus rest on each other to form new varieties of equilibrium \cite{scheeres_minE}.{\color{black} We also invoke the continuum hypothesis which enables us to treat these bodies as infinitely divisible and, thus, to be able to deform into new shapes. We do not develop a detailed model of this reshaping process, however, and only invoke it to identify minimum energy configurations}.

\subsection{Degrees of Freedom}
A system of $N$ finite density bodies with masses $m_i$ and inertia dyads $\bfm{I}_i$, $i = 1,2,\ldots, N$ will have $6N$ degrees of freedom in general, with each body having 3 translational and 3 rotational coordinates, represented as $\bfm{r}_i$ and $\bfm{T}_i$, $i = 1, 2, \ldots, N$. For translational degrees of freedom the position vector $\bfm{r}_i$ is the body center of mass relative to a system barycentric frame. For the rotational degrees of freedom the rotation dyad $\bfm{T}_i$ specifies the body orientation relative to an inertial frame. The individual entries of $\bfm{T}_i$ are the dot products between the body and inertial frame, with the dyad defined as taking a vector in the body frame and transforming it into the inertial frame. As the inertia dyads are generally stated in the body-frame, they can be mapped into the inertial frame by $\bfm{T}_i\cdot\bfm{I}_i\cdot\bfm{T}_i^T$. The rotation dyad has the constraint that $\bfm{T}_i^T\cdot\bfm{T}_i = \bfm{T}_i\cdot\bfm{T}_i^T = \bfm{U}$, where $\bfm{U}$ is the identity dyad. 

Of more interest than the inertial locations and orientations are the relative positions and orientations between the objects, defined as $\bfm{r}_{ij} = \bfm{r}_j - \bfm{r}_i$ and $\bfm{T}_{ij} = \bfm{T}^T_j \cdot\bfm{T}_i$, $1 \le i < j \le N$. The total number of degrees of freedom can be reduced through translational and angular momentum symmetries, however in this paper such reductions are not made. 

With the finite density assumption, any two bodies have a constraint on how close their centers of mass can come to each other, denoted as $r_{ij} \ge d(\bfm{r}_{ij}, \bfm{T}_{ij})$, where equality occurs when the two bodies touch. This form of the constraint implicitly implies that the bodies are mutually convex, although this assumption can be relaxed.  
In the simplest model, the individual bodies are spheres with a given size and density, however the problem is easily generalized to arbitrarily shaped components, that exert torques on each other and whose rotational dynamics must also be accounted for \cite{scheeres_minE}. The spherical model is frequently used to compute specific examples, however the general results developed are largely independent of such an assumption. 

If a given rigid body is sub-divided into components, then the system gains additional degrees of freedom that, ideally, must be specified. When considering divided bodies we will generally assume that they take on a spherical shape (as will be justified and explained later), meaning that the relative positions of the components are the most important degrees of freedom.

{\color{black}
\subsection{Partitions, Collections and Components}
Our collection of $N$ finite density bodies can be represented generically as the collection ${\cal B} = \left\{{\cal B}_1, {\cal B}_2, \ldots, {\cal B}_N\right\}$. Each of the bodies ${\cal B}_i$ has its own unique mass, $m_i$, as well as its shape, density and other unique specifiers. The total mass of the system is $m = \sum_{i=1}^N m_i$. 

Each of these bodies can also be arbitrarily split into a countable set of components, or ${\cal B}_i = \left\{{\cal B}_{i_1}, {\cal B}_{i_2}, \ldots, {\cal B}_{i_j},\ldots, {\cal B}_{i_{N_i}} \right\}$ where $1\le N_i < \infty$. The mass of the $j$th component is specified as $m_{i_j} = m_i \mu_{i_j}$ where $\mu_{i_j}$ is its mass fraction, defined for $j = 1,2,\ldots,N_i$ such that $0<\mu_{i_j}\le 1$ and  $\sum_{j=1}^{N_i} \mu_{i_j} = 1$. 

A single collection of bodies can then be uniquely specified by a set of integers that specify individual bodies and sub-components of individual bodies. For example, $\mathbb{I} = \left\{ 1, 2, 3_1, 3_2, \ldots i_j \right\}$ where $i \in [1,N]$ and $j \in [1,N_i]$. Then the collection of bodies is specified as ${\cal B}({\mathbb{I}}) = \left\{ {\cal B}_{i_j} | i_j \in \mathbb{I} \right\}$. In terms of notation, if $N_i = 1$ the index $j$ is not specified, and if $N = 1$ the index $i$ is not specified. 

Given this notation, we define the global partition of a mass distribution ${\cal B}$. 
\begin{definition}
\label{def:partition}
The {\bf Global Partition} of a mass distribution ${\cal B}$ is defined by the set of indices ${\cal I}_M = \left\{ \mathbb{I}_1, \mathbb{I}_2, \mathbb{I}_3, \ldots, \mathbb{I}_i, \ldots, \mathbb{I}_M \right\}$ and is ${\cal B}({\cal I}) = \left\{ {\cal B}(\mathbb{I}_i) | \mathbb{I}_i \in {\cal I}\right\}$. The $M$ subscript will be used when we wish to emphasize the number of components, but will be suppressed in other situations. 
Such a partition is called a global partition of the mass distribution if ${\cal B}(\mathbb{I}_i) \ne \emptyset$, ${\cal B}(\mathbb{I}_i) \bigcap {\cal B}(\mathbb{I}_j) = \emptyset$ for $i\ne j$, and $\bigcup_{\mathbb{I}_i\in{\cal I}} {\cal B}(\mathbb{I}_i) = {\cal B}$. In words, ${\cal B}({\cal I})$ is a global partition if its individual elements ${\cal B}(\mathbb{I}_i)$ are ordered, non-empty, disjoint and if their union consists of all the mass in the system. 
\end{definition}


If the bodies are discrete, uniquely identifiable and not subdivided the number of possible ways they can be split into different unique configurations is equal to the Bell Number, $B_N$. The number of possible partitions increases quite rapidly with $N$, with the first members of the sequence equal to $B_0 = 1$, $B_1 = 1$, $B_2 = 2$, $B_3 = 5$, $B_4 = 15$, $B_5 = 52$, and with the upper bound $B_N < \left[\frac{0.792 \ N}{\ln(N+1)}\right]^N$ \cite{bell_number}. 
If the bodies are discrete but not unique, e.g., if all of the bodies are identical and cannot be distinguished, the number of possible combinations is reduced. Then, given $N$ bodies, the number of different groupings the collection can be separated into is equal to its integer partition. The integer partitions of a number $N$ are the different ways the number $N$ can be expressed as a sum of positive integers, and the total number of ways it can be so formed is designated as $p(N)$. For example, $p(3) = 3$, with partitions 3, 1+2 and 1+1+1. Figure \ref{fig:configurations} graphically shows the integer partitions for $N=2 \rightarrow 8$ using a Young diagram. The additional information in the figure is defined later. For large $N$ an asymptotic formula for the number of integer partitions is given as $p(N) \sim \exp (\pi \sqrt{2N/3})$ \cite{andrews1998theory}, and in this case the number of possible partitions becomes exponentially large. 
Finally, if the bodies can be sub-divided the number of ways the system can be split up is uncountable, even if every possible partition consists of a countable set of components. 
}

\begin{figure}[ht!]
\centering
\includegraphics[scale = 0.35]{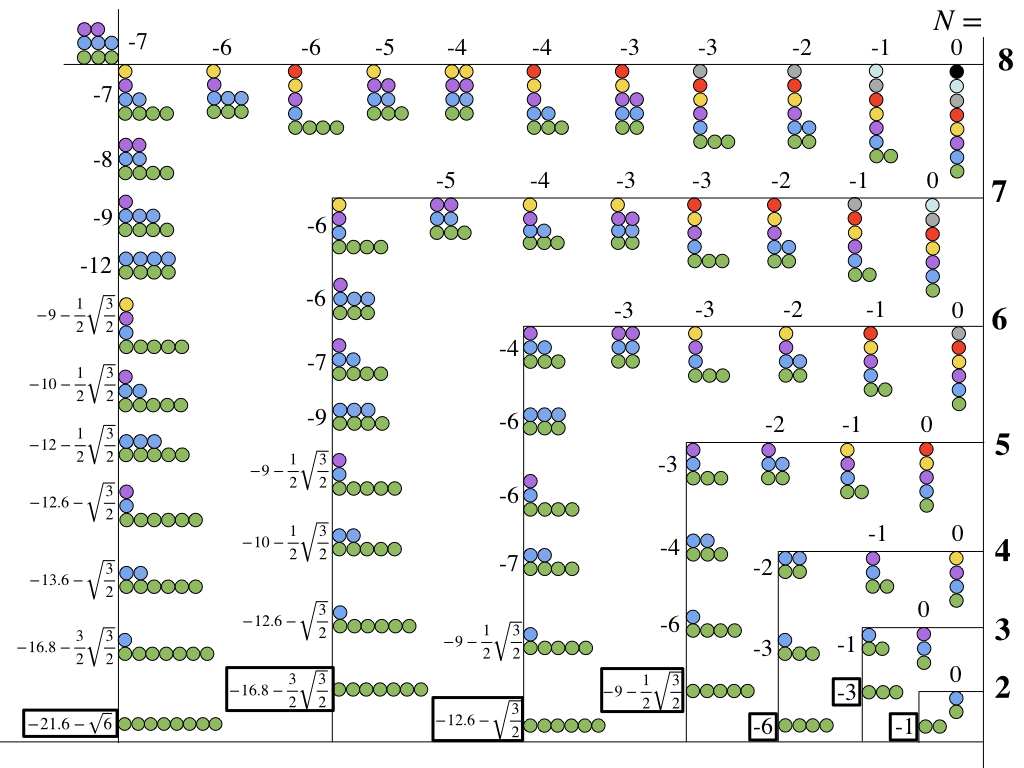}
\caption{Young Diagram for $N = 2 \rightarrow 8$, showing the minimum energy for each partition, scaled by ${\cal G}m^2/d$. Per Corrolary 2, the minimum disassociation energy of any partition is equal to the minimum energy of the next partition in the diagram. The boxed numbers are minimum energies for that number of bodies, while the other numbers are the disassociation energies for those partitions. }
\label{fig:configurations}
\end{figure}

\subsection{Configurations}

In addition to dividing bodies into different partitions, it is also important to specify their relative positions and orientations within each sub-partition. For finite density bodies this means specifying their relative positions and relative attitudes, denoted as $\bfm{r}_{ij}$ for the relative position vector between two bodies and $\bfm{T}_{ij}$ for the relative orientation dyad between two bodies. 
\begin{definition}
\label{def:configuration}
{\bf Configurations:} 
A {\bf Configuration} is defined as a sub-partition $\mathbb{I}$ and the set $\bfm{Q}({\mathbb{I}}) = \left\{ \bfm{r}_{ij}, \bfm{T}_{ij} | i,j\in \mathbb{I}\right\}$, and defines the relative orientation and position of components within the sub-partition $\mathbb{I}$. 
%
The {\bf Global Configuration} of the system is designated as the combination of a global partition ${\cal I}$ with the associated entire configuration of the sub-partitions ${\cal Q}({\cal I}) = \left\{ \bfm{Q}(\mathbb{I}_1), \bfm{Q}(\mathbb{I}_2), \ldots, \bfm{Q}(\mathbb{I}_M)\right\}$. 
\end{definition}
What is left unsaid is what the relative configurations are between two sub-partitions. This could be specified in general, but as this paper is only interested in when sub-partitions have escaped relative to each other, meaning that all of the relative positions between two bodies in different sub-partitions will be infinite. 

\section{Mechanics}

Having defined partitions and configurations, the gravitational interactions between them and the resulting mechanics must be summarized. 

\subsection{The Amended Potential and Constraints} 

Key to the discussion is the amended potential for the finite density $N$-body problem. This has been defined in more detail elsewhere \cite{scheeres_minE}, and thus is stated without derivation as 
\beq
	{\cal E} & = & \frac{H^2}{2 \ I_H} + {\cal U}
\eeq
where $H$ is the total angular momentum magnitude of the entire system, $I_H$ is the total moment of inertia of the entire system about the angular momentum direction, and ${\cal U}$ is the total gravitational potential energy of the system. The amended potential does not define a constant of motion for the system, but has a specific relationship to the constant energy of the system which can be derived from Sundman's Inequality \cite{scheeres_minE} 
\beq
	{\cal E} & \le & E
\eeq
where $E = T + {\cal U}$ is the conserved combination of kinetic, $T$, and total gravitational potential, ${\cal U}$, and is a constant. It is important to note that for a fixed level of angular momentum the amended potential is only a function of the system configuration degrees of freedom, and is not a function of the system velocities.  In general the amended and the total energies are only equal when the system is in a relative equilibrium configuration or is momentarily at a zero-velocity condition. 

Define the general potential energy as 
\beq
	{\cal U} & = & \sum_{1\le i < j \le N} {\cal U}_{ij}(\bfm{r}_{ij}, \bfm{T}_{ij}) + \sum_{i=1}^N {\cal U}_{ii}
\eeq
where ${\cal U}_{ij}$ is the mutual potential between two bodies and ${\cal U}_{ii}$ is the self potential of the $i$th body. 
When dealing with rigid bodies the mutual and self potential energies are easily tracked separately. For continuum models, however, these two are often mixed and must be separated. In the following the total self-potential of the system is refered to as ${\cal U}_{Self} = \sum_{i=1}^N {\cal U}_{ii}$. The mutual potential is defined as ${\cal U}_{Mutual} =  \sum_{1\le i < j \le N} {\cal U}_{ij}(\bfm{r}_{ij}, \bfm{T}_{ij})$ and is a function of the system configuration and thus can have changing values. 

The total moment of inertia about $H$ is 
\beq
	I_H & = & \hat{\bf H} \cdot \left[ \sum_{1\le i \le N} \bfm{T}_i\cdot\bfm{I}_i\cdot\bfm{T}_i^T + \sum_{1\le i < j \le N} \frac{m_i m_j}{m_i+m_j} \left( r_{ij}^2 \bfm{U} - \bfm{r}_{ij}\bfm{r}_{ij}\right)
	\right] \cdot \hat{\bf H} 
\eeq
where $\bfm{I}_i$ is the inertia dyad of each body, $m_i$ is the mass of each body and $\hat{\bfm{H}}$ is the unit vector that points along the conserved total angular momentum of the system. 

Given these definitions one can note that ${\cal E}$ is only a function of the system configuration ${\cal Q}$ and the total angular momentum $H$, and is bounded by the total energy. Thus, this function relates the system configuration to the fundamental conserved quantities of the system. 

\subsection{Relative Equilibrium Configurations}

The starting point for this analysis is the idea that a collection of rigid bodies have settled into a relative equilibrium at a fixed value of angular momentum, meaning that all components are stationary and rotating at a constant rate. If the further restriction that the configuration be stable is imposed, then the collection has no pathway to a disrupted or deformed state. The existence of such stable states has been studied for asteroidal and cometary bodies, and the detailed conditions worked out for two, three, four and more bodies resting on each other \cite{scheeres_F2BP, scheeres_icarus_fission, scheeres_minE, F4BP_chapter, F3BP_scheeres}. 

When such a system is in relative equilibrium its amended potential is at a stationary value. 
The condition for a system to be in such an equilibrium is then that $\delta{\cal E}(\bfm{Q}^*) \ge 0$ over all allowable configuration variations in the neighborhood of $\bfm{Q}^*$ (the equality is for degrees of freedom that are not constrained while the inequality are for active constraints). Conditions for the stability of the system are $\delta^2{\cal E}(\bfm{Q}^*) > 0$, where the second variation must be positive definite in all of the unconstrained degrees of freedom \cite{scheeres_minE}. The equilibrium configuration can have all of the bodies resting on each other, or may have bodies in orbit about each other \cite{moeckel2018counting}. However, to be stable the system must be collected into one or two bodies only \cite{moeckel2017minimal}. When in such an equilibrium the total system energy is equal to the amended potential, $E = {\cal E}({\bf Q}^*)$. If the system is excited away from an equilibrium condition then instead the inequality ${\cal E}({\bf Q}^*+\delta\bfm{Q}) \le E$ holds.

\subsection{Disruption Conditions}

For a given configuration in an equilibrium, if the angular momentum $H$ is changed over time (due to exogenous effects such as the YORP effect \cite{YORP_AIV}) then the amended potential and hence total energy also increases due to the $H^2 / (2 I_H)$ term. There exist values of angular momentum where the stability of a given equilibrium can change, or more drastically an equilibrium can disappear, meaning that the system suddenly cannot remain in a relative stationary state and starts undergoing dynamical motion. In the ensuing motion, depending on the level of energy in the system and on the mass distribution of the system, the newly dynamic components of the system may escape from each other and form the aforementioned asteroid pairs or clusters. 
The thresholds for such transitions have been studied for a variety of few-body systems \cite{scheeres_F2BP, scheeres_icarus_fission, scheeres_minE,F4BP_chapter,F3BP_scheeres}. 

When a system undergoes such a transition and some components of it escape, the total angular momentum and energy across this transition ideally stay constant. However, as at least two components will have a distance $r_{ij} \rightarrow\infty$, this means that $I_H \rightarrow\infty$ and thus $H^2 / (2 I_H) \rightarrow 0$. Also, the mutual potential energy between those two components also goes to zero, or ${\cal U}_{ij}\rightarrow 0$. This leads to a transition in the system's amended potential such that 
\beq
	{\cal E} & \rightarrow & \sum_{i=1}^{M} \sum_{j,k\in\mathbb{I}_i} {\cal U}_{jk}({\bfm{Q}(\mathbb{I}_i}))
\eeq
The system energy can then be written as $ E = T + \sum_{i=1}^{M} {\cal U}({\bfm{Q}_i}) \ge {\cal E}$, where the shorthand notation ${\cal U}({\bfm{Q}_i}) = \sum_{j,k\in\mathbb{I}_i} {\cal U}_{jk}({\bfm{Q}(\mathbb{I}_i}))$ is introduced. 
If the mutual potentials of the different subcomponents of the system can be delineated, this relation gives a test for whether such an escape can occur at the given level of energy. 

In general the energy can be repartitioned into individual amended potentials for each sub-partition, which are denoted as ${\cal E}_{\mathbb{I}_i}$ and which include the angular momentum of that sub-component, $H_i$, and its kinetic energy, $T_i$. The total angular momentum and energy budget must also include the translational kinetic energy and orbital angular momentum of the bodies escaping from each other. This paper does not probe what these different energies could be, and only focuses on the strict and rigorous limits for such escape to occur. 

\section{Hill Stability Definitions and Theorems}

The heart of this paper is to quantify the partitions that the original problem has been split into, the mutual potentials that can exist within each partition and the level of energy of the original $N$ body assemblage that will allow the body to disaggregate in this way.  

With the appropriate background given, the concept of Hill Stability and associated theorems of interest can be introduced. 
\begin{definition}
\label{def:hill}
A sub-partition $\bfm{Q}(\mathbb{I})$ of our system is {\bf Hill Stable} if there exists a positive constant $C$ such that $r_{ij} < C < \infty$ for all $i,j \in \mathbb{I}$ and for all time. 
\end{definition}


\begin{definition}
\label{def:hill_config}
A global partition ${\cal Q}({\cal I}_M)$ composed of $M$ sub-partitions is termed to be {\bf Configuration Hill Unstable} if each sub-partition is Hill Stable itself, but each pair of sub-partitions escape relative to each other. If we define the distance between the sub-partitions to be $d_{ij}$, then there is a sequence of times such that the mutual distances between all of the sub-partitions become infinite, even while the individual sub-partitions keep all of their components as bounded.  Given this definition, we say that the converse of this is defined to be {\bf Configuration Hill Stable}, meaning that there is at least one pair of sub-partitions that do not escape. 
\end{definition}

The relative configuration within a sub-partition is quite important as it drives what the available gravitational potential energy of that sub-partition is. Of greatest interest for the current study is the minimum energy gravitational potential of a given partition and associated configuration. 
\begin{definition}
\label{def:minE}
{\bf Minimum Energy Configurations:} \\
The {\bf Discrete Minimum Energy Configuration} for a given sub-partition $\mathbb{I}$ is defined as 
\[
\bfm{Q}^*(\mathbb{I}) = \mbox{argmin}_{\bfm{Q}(\mathbb{I})} \ \sum_{i,j\in\mathbb{I}} {\cal U}_{ij}(Q_{ij})
\] 
and the associated minimum energy value associated with this state is 
\[
{\cal U}_{min}(\mathbb{I}) = \sum_{i,j\in\mathbb{I},Q_{ij}\in\bfm{Q}^*} {\cal U}_{ij}(Q_{ij}) + \sum_{i\in\mathbb{I}} {\cal U}_{ii}
\]

Thus the minimum energy configuration for a full partition is ${\cal Q}_{min}({\cal I})$ and the potential energy value associated with it ${\cal U}_{min}({\cal I})$. 
For the discrete minimum energy configurations to be defined it is necessary that the components have a finite distance constraint between each body, or $r_{ij} > 0$. Otherwise, the minimum potential is unbounded from below and goes to $-\infty$. We note that a minimum energy configuration exists for every collection of finite density bodies. 

For continuous bodies the minimum energy configuration for a single body can be stated precisely and is classically known to be a spherical distribution. 
The {\bf Continuous Minimum Energy Configuration} for a single body of mass $m$ and (constant) density $\rho$ is for the body to gather itself into a sphere of radius $R = \left[ \frac{3 m}{4\pi\rho}\right]^{1/3}$. Its minimum gravitational potential energy is then
\[
{\cal U}_{min}(m) = - \frac{3{\cal G}m^2}{5 R} = - \frac{3{\cal G}}{5 }\left(\frac{4\pi\rho}{3}\right)^2 R^5
\]
This result holds whether the body is a distribution of grains (under the continuum limit), or if it is a single spherical grain. Also, if the body has some angular momentum, it can be shown that the sphere is no longer the minimum energy configuration -- although with the additional angular momentum the total energy will always be greater than that of just a spherical mass distribution with no rotation.  
\end{definition}

With these definitions it is possible to establish several rigorous results on Hill Stability for the Full $N$-Body Problem. 
\begin{theorem}
\label{thm:HS}
A global partition of the Full $N$-Body Problem, $\mathcal{I}$, is Configuration Hill Stable if $E < \sum_{1\le i \le M} \mathcal{U}_{min}(\mathbb{I}_i)$. 
\end{theorem}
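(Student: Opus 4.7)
The plan is to argue the contrapositive: assume that the global partition $\mathcal{I}_M$ is Configuration Hill Unstable, and derive $E \ge \sum_{i=1}^M \mathcal{U}_{min}(\mathbb{I}_i)$. By Definition \ref{def:hill_config} this assumption provides a sequence of times $t_n \to \infty$ along which every cross-partition separation $d_{ij}(t_n)$ tends to infinity while each individual $\mathbb{I}_i$ remains internally Hill Stable. The central tool will be Sundman's inequality $\mathcal{E}(t_n) \le E$ evaluated along this sequence in the limit $n \to \infty$.

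First I would split the total potential with respect to the partition. Separating body pairs by whether they lie in a common sub-partition or across different ones, and grouping self-potentials $\mathcal{U}_{jj}$ into the appropriate intra-partition bucket in exactly the same way they appear in Definition \ref{def:minE}, gives
\[
\mathcal{U} \ =\ \sum_{i=1}^M \mathcal{U}\big(\bfm{Q}(\mathbb{I}_i)\big) \ +\ \sum_{1\le i < k \le M}\ \sum_{\substack{j\in\mathbb{I}_i \\ \ell\in\mathbb{I}_k}} \mathcal{U}_{j\ell}.
\]
Combining this with $\mathcal{E} = H^2/(2I_H) + \mathcal{U}$ and Sundman's inequality yields, for every $n$,
\[
E \ \ge\ \mathcal{E}(t_n) \ =\ \frac{H^2}{2 I_H(t_n)} \ +\ \sum_{i=1}^M \mathcal{U}\big(\bfm{Q}(\mathbb{I}_i)(t_n)\big) \ +\ \sum_{i<k}\sum_{j\in\mathbb{I}_i,\,\ell\in\mathbb{I}_k} \mathcal{U}_{j\ell}(t_n).
\]

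Next I would lower-bound the three pieces on the right individually: the rotational term $H^2/(2I_H)$ is always nonnegative; each intra-partition piece satisfies $\mathcal{U}(\bfm{Q}(\mathbb{I}_i)(t_n)) \ge \mathcal{U}_{min}(\mathbb{I}_i)$ by the very definition of the minimum energy configuration; and each cross-partition two-body potential $\mathcal{U}_{j\ell}(t_n)$ tends to zero because gravitational mutual potentials vanish at infinite separation and the sum over $(i,k,j,\ell)$ has only finitely many terms. Taking the liminf as $n \to \infty$ of the displayed inequality then yields $E \ge \sum_{i=1}^M \mathcal{U}_{min}(\mathbb{I}_i)$, which is the contrapositive statement, giving the theorem.

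The hard part is essentially book-keeping rather than analysis: I must make sure the self-potentials $\mathcal{U}_{ii}$ are grouped into the intra-partition terms in precisely the way they enter Definition \ref{def:minE}, and I must verify that the finite collection of cross-partition two-body potentials really decays to zero along the escape sequence (rather than just pointwise along subsequences). It is worth noting that no further appeal to Hill Stability of the individual sub-partitions is needed when lower-bounding the intra-partition potentials, since $\mathcal{U}(\bfm{Q}(\mathbb{I}_i)) \ge \mathcal{U}_{min}(\mathbb{I}_i)$ is a universal bound by construction; Hill Stability of each sub-partition enters only through Definition \ref{def:hill_config} to guarantee that this limiting decomposition is the correct picture of the motion.
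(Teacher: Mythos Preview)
Your proposal is correct and follows essentially the same approach as the paper: assume the partition escapes, pass to the limit along the escape sequence so that the cross-partition contributions to $\mathcal{E}$ vanish, and then compare the surviving intra-partition potentials to their minima via Sundman's inequality. Your version is a bit more careful in its bookkeeping (explicit time sequence, explicit splitting of $\mathcal{U}$, and the observation that $H^2/(2I_H)\ge 0$ suffices without actually needing $I_H\to\infty$), but the argument is the same.
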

\begin{proof}
Assume that $E < \sum_{1\le i \le M} \mathcal{U}_{min}(\mathbb{I}_i)$ but that the system is not Configuration Hill Stable for these components. Thus one can assume that all of these components mutually escape each other. As the distances between these components go to $\infty$, $I_H \rightarrow \infty$ and $\mathcal{U} \rightarrow \sum_{1\le i \le M} \mathcal{U}_{}(\mathbb{I}_i)$. If, in addition, any of the individual components $\mathbb{I}_i$ also undergoes mutual escape, the amended potential loses that negative contribution and takes on an even larger overall value, however we do not need to consider that case here.  Applying the energy constraint then gives $\mathcal{E} = \sum_{1\le i \le M} \mathcal{U}(\mathbb{I}_i) \le E < \sum_{1\le i \le M} \mathcal{U}_{min}(\mathbb{I}_i)$. However, as $\mathcal{U}_{min} \le \mathcal{U}$, this is a contradiction, meaning that the given global partition is Configuration Hill Stable. 
\end{proof}
In this context, Configuration Hill Stability means that not all of the sub-components of the partition can mutually escape each other. If instead only some components escape, but others are bound to each other, this is equivalent to a partition with fewer components being Hill Unstable, and corresponds to a partition at a lower value of energy being Hill Unstable. 

For system components that include bodies which are subdivided we can develop a more explicit definition for Hill Stability. 
\begin{corollary}
\label{cor:cont} 
Given a system of mass $m$, its minimum energy potential is ${\cal U}_{min}(m)$. Let it be divided into a global partition ${\cal I}=\left\{ \mathbb{I}_1, \mathbb{I}_2, \ldots \right\}$ where each component $\mathbb{I}_i$ has a mass fraction $\mu_i$ and $\sum_{i}\mu_i = 1$.  Then this partition is Configuration Hill Stable if the system energy $E < {\cal U}_{min} \left[ \sum_{i} \mu_i^{5/3}\right]$. 
\end{corollary}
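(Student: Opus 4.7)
The plan is to deduce this corollary directly from Theorem \ref{thm:HS} together with the explicit formula for the continuous minimum-energy configuration of a single body given in Definition \ref{def:minE}. First I would observe that Theorem \ref{thm:HS} already gives a Configuration Hill Stability criterion in the general form $E < \sum_{i} {\cal U}_{min}(\mathbb{I}_i)$, so the entire content of the corollary lies in re-expressing the right-hand side in terms of the total mass $m$, the mass fractions $\mu_i$, and ${\cal U}_{min}(m)$.

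Next I would invoke the continuum hypothesis stated in the Model section: each component $\mathbb{I}_i$ of mass $m_i = \mu_i m$ may collapse, at its minimum energy configuration, into a sphere of radius $R_i = [3 m_i/(4\pi\rho)]^{1/3}$. Assuming a common density $\rho$ across the partition (so that the formula ${\cal U}_{min}(m) = -3{\cal G}m^2/(5R)$ applies consistently to each piece), the key step is the scaling computation: since $R_i \propto m_i^{1/3}$, one has
\[
{\cal U}_{min}(\mathbb{I}_i) \;=\; -\frac{3{\cal G}m_i^2}{5 R_i} \;=\; {\cal U}_{min}(m)\,\left(\frac{m_i}{m}\right)^{5/3} \;=\; {\cal U}_{min}(m)\,\mu_i^{5/3}.
\]
Summing over $i$ gives $\sum_i {\cal U}_{min}(\mathbb{I}_i) = {\cal U}_{min}(m)\left[\sum_i \mu_i^{5/3}\right]$, which when substituted into the Theorem \ref{thm:HS} criterion yields precisely the claimed bound.

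I do not expect any serious obstacle in this proof; it is essentially a dimensional scaling argument grafted onto Theorem \ref{thm:HS}. The one subtlety worth flagging explicitly is the implicit assumption that the density $\rho$ is preserved under subdivision — i.e., each mass fraction $\mu_i m$ can actually reach the spherical minimum of the same density as the original body. Under the continuum hypothesis this is automatic, but if sub-bodies carried distinct densities, the coefficient $\mu_i^{5/3}$ would have to be replaced by $\mu_i^{5/3}(\rho/\rho_i)^{1/3}$. I would note this briefly so that the statement is applied only in the homogeneous-density regime for which it was intended, and then conclude by Theorem \ref{thm:HS}.
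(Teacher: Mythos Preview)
Your proposal is correct and follows essentially the same argument as the paper: invoke Theorem~\ref{thm:HS}, then use the constant-density scaling $R_i = \mu_i^{1/3}R$ to get ${\cal U}_{min}(\mathbb{I}_i) = \mu_i^{5/3}\,{\cal U}_{min}(m)$, and sum. The paper's proof adds one small remark you omit: for a countably infinite partition the sum $\sum_i \mu_i^{5/3}$ converges because $\mu_i^{5/3} < \mu_i$ and $\sum_i \mu_i = 1$; you may want to include this line.
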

\begin{proof}
First note that for a constant density spherical body of radius $R$, the spherical radius of the mass fraction $\mu$ of this body is $r = \mu^{1/3}R$. Then, as the minimum potential of this mass fraction is to assemble itself into a sphere of radius $r$, then its self potential is just scaled as $r^{5} = \mu^{5/3} R^5$ and the minimum potential energy of the mass fraction is ${\cal U}(m) \mu^{5/3}$.
Then, from Theorem 1 we can show that the bounding energy $\sum_{1\le i \le M} \mathcal{U}_{min}(\mathbb{I}_i) = {\cal U}_{min}(m)  \left[ \sum_{i} \mu_i^{5/3}\right]$. Finally, as $\mu_i < 1$ and $\sum_i \mu_i = 1$ it can be shown that the summation $\sum_{i} \mu_i^{5/3}$ converges, by noting that $\mu_i^{5/3} < \mu_i$. 
\end{proof}

For spherical, discrete particles we can prove the following corollary which defines the minimum energy for disruption. 
\begin{corollary}
\label{cor:disc} 
For a discrete system of spherical bodies with a common density, let $m_{m} = \min_{i\in \mathbb{I}} m_i$ be the minimum mass of this full $N$-body system and $i_m$ be its index. Let $\mathbb{I}-i_m$ be the $N-1$ body system that excludes this minimum mass. Then, the original full $N$-body system is Hill Stable if $E < {\mathcal{U}_{min}(\mathbb{I}-i_m)}$. 
\end{corollary}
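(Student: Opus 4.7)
The plan is to apply Theorem~\ref{thm:HS} to the two-component global partition $\mathcal{I}^{\star} = \{\{i_m\},\, \mathbb{I} - i_m\}$ that isolates the smallest body from the remainder, and then to show that this particular partition gives the tightest escape bound among all non-trivial global partitions, so that ruling it out secures full Hill Stability of the $N$-body system.

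First, I would instantiate Theorem~\ref{thm:HS} at $\mathcal{I}^{\star}$. The singleton sub-partition $\{i_m\}$ carries no mutual-potential contributions, so $\mathcal{U}_{min}(\{i_m\})$ collapses to the self-potential $\mathcal{U}_{i_m i_m}$ of a single sphere of mass $m_m$, and the theorem supplies a sufficient condition on $E$ built from $\mathcal{U}_{min}(\mathbb{I}-i_m)$ together with this (constant) self-term. This already precludes the smallest body from escaping from the remaining $N-1$ bodies.

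The substantive step is to show that $\mathcal{I}^{\star}$ minimizes $\sum_i \mathcal{U}_{min}(\mathbb{I}_i)$ among all global partitions with $M \ge 2$, so that preventing this single escape channel automatically prevents every other. I would argue this in two parts. First, any refinement of a 2-partition into three or more sub-partitions strips away additional negative mutual-potential terms, thereby enlarging $\sum_i \mathcal{U}_{min}(\mathbb{I}_i)$; so attention reduces to 2-partitions. Second, among 2-partitions $\{A, B\}$ one compares the severed cross-edges directly: with $\mathcal{U}_{ij,\min} = -\mathcal{G} m_i m_j/(R_i+R_j)$ and $R_i \propto m_i^{1/3}$ for common-density spheres, the split $\mathcal{I}^{\star}$ is the one that discards the least binding energy, because the edges incident to $i_m$ have the smallest magnitudes of any edges incident to a single body. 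Since any body eventually escaping the full system must lie in some subset $A$ escaping its complement $B$, ruling out every 2-partition rules out every escape and therefore delivers Hill Stability of the full $N$-body system.

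The hardest piece will be this second comparison. The continuum-style scaling $\mathcal{U}_{\min}(S) \sim -M_S^{5/3}$, which would otherwise let one invoke Jensen's inequality on the convex map $x \mapsto x^{5/3}$, is exact only for a single merged sphere; a packed collection of discrete rigid spheres has internal geometry that depends on the individual radii, so the scaling is only suggestive. A rigorous argument must instead proceed edge by edge, using the monotonicity of $-m_i m_j/(R_i + R_j)$ in both mass arguments together with $m_{i_m} \le m_i$ to confirm that isolating the single smallest body discards the least binding energy among any alternative 2-partition. Once this comparison inequality is in hand, Theorem~\ref{thm:HS} applied at $\mathcal{I}^{\star}$ closes out the corollary.
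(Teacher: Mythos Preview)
Your strategy is the same one the paper uses, but you have made it considerably more elaborate than what the paper actually does. The paper's proof does not route through Theorem~\ref{thm:HS} at all; it simply re-runs the Sundman argument directly: assume $E < \mathcal{U}_{min}(\mathbb{I}-i_m)$ and that some body escapes, note that $I_H\to\infty$ drives $\mathcal{E}\to\mathcal{U}(\mathbb{I}-\text{escaped body})$, and obtain the contradiction $\mathcal{U}(\mathbb{I}-\text{escaped body})\le E<\mathcal{U}_{min}(\mathbb{I}-\text{escaped body})$. The extremality step is correspondingly lighter: the paper does \emph{not} compare against all 2-partitions, only against the $N$ singleton removals $\mathbb{I}-j$, and it disposes of that comparison with a one-line monotonicity observation (``adding a larger mass to a collection decreases its minimum energy by more, so removing the smallest mass raises it the least''), rather than your edge-by-edge analysis of $-\mathcal{G}m_im_j/(R_i+R_j)$. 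Your reduction from $M\ge3$ to $M=2$ and your concern about general 2-partitions $\{A,B\}$ with $|A|\ge2$ are issues the paper leaves implicit.

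One point worth flagging: you correctly noticed that instantiating Theorem~\ref{thm:HS} at $\mathcal{I}^\star$ produces the threshold $\mathcal{U}_{min}(\mathbb{I}-i_m)+\mathcal{U}_{i_m i_m}$, not $\mathcal{U}_{min}(\mathbb{I}-i_m)$, because the singleton's self-potential does not vanish. The paper's own proof simply writes $\mathcal{U}(\mathbb{I})\to\mathcal{U}(\mathbb{I}-i_m)$ and suppresses that self term, consistent with its later convention (Section~6.1) of excluding self-potentials from the $\mathcal{U}_{min}$ bookkeeping. So your extra care here is warranted, but be aware that the discrepancy is a matter of the paper's convention rather than a flaw in your logic; if you want the corollary exactly as stated, you should mirror the paper and run the Sundman contradiction directly with self-potentials stripped out, rather than invoking Theorem~\ref{thm:HS} verbatim.
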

\begin{proof}
We first assert that ${\mathcal{U}_{min}(\mathbb{I}-i_m)} < {\mathcal{U}_{min}(\mathbb{I}-j)}$ where $i_m \ne j$. As the particles all have the same density, it is clear that adding an additional mass to an existing collection will decrease its minimum energy, but that it will be decreased by a larger amount for an added mass $m_j > m_{i_m}$. Thus, it stands to reason that removing a smaller mass from a collection will increase the minimum energy by a smaller amount than removing a larger mass. If we allow for different densities within the system, then this computation and comparison becomes more difficult. 

Given this, then assume that $E < {\mathcal{U}_{min}(\mathbb{I}-i_m)}$ but that the system is not Hill Stable, and escape of at least one particle can occur. If body $i_m$ escapes, $I_H \rightarrow \infty$ and $\mathcal{U}(\mathbb{I}) \rightarrow \mathcal{U}(\mathbb{I}-i_m)$, as ${\cal U}_{j i_m}\rightarrow 0$. However, Sundman's Inequality still holds, and after the escape gives $\mathcal{E} = \mathcal{U}(\mathbb{I}-i_m) \le E < \mathcal{U}_{min}(\mathbb{I}-i_m)$. However, this cannot be true by definition, and thus the contention that the system is not Hill Stable is untrue. Applying the converse, we see that the system is Hill Stable. 
\end{proof}



\section{Energy Classifications}

Given the basic conditions for the disruption of a partition,  the levels of energy that lead to these different results are now considered. 

\subsection{Failure Energy}

First consider the energy required to precipitate a change in the system's stability, or when it has a plastic failure and deformation that could result in fission. This ``failure energy'' can be split into two main components, one associated only with gravitational forces, denoted as $E_{G}$, and one associated with internal cohesive strength, denoted as $E_C$. Each of these are described below in more detail. The combination of these defines the failure energy, denoted here as $E_{Fail} = E_G + E_C$. 

\paragraph{Gravitational Failure Energy} This energy is a function of the current system configuration and overall angular momentum. For any resting system in an equilibrium configuration $\bfm{Q}^*$, there exists an angular momentum level at which the equilibrium ceases to exist or becomes unstable (assuming no cohesive strength). The gravitational failure energy $E_G$ then equals the value of the amended potential at this configuration and angular momentum, or $E_G(\bfm{Q}^*, H_G) = \frac{H_G^{2}}{2 I_H(\bfm{Q}^*)} + {\cal U}(\bfm{Q}^*)$. This transition has been studied for few-body systems in a number of publications \cite{scheeres_minE, F4BP_chapter, F3BP_scheeres}. For a given configuration, this will be the minimum energy needed to allow the system to change. It should be noted that for some configurations, the system is still Hill Stable even though a given equilibrium configuration may cease to exist. As this transition mode has been studied extensively analytically for discrete systems it is not described further. 

For a spherical continuous mass distribution a conservative upper limit on the spin rate for gravitational failure is the ``critical spin'' rate, $\omega_{G}$, equal to the orbit angular rate of a particle at the surface of the equal mass sphere, $\omega_G = \sqrt{{\cal G}m/R^3} = \sqrt{4\pi{\cal G}\rho/3}$, and is only a function of the body mean density and independent of size. In fact, this limiting spin rate is too large, as a cohesionless spherical body (with an angle of friction $< 90^\circ$) will fail at lower spin rates by landsliding or deformation \cite{holsapple_original, toshi_1950DA}.  
Given the conservative nature of the bounded spin rate, an upper bound on the gravitational failure energy can be found in this case. Assuming a spherical body with moment of inertia $\frac{2}{5}m R^2$, a bound on the gravitational failure energy is found by combining the kinetic energy from rotation and the self-potential of the sphere. 
\beq
	E_G & < &  - \frac{2}{5}\frac{{\cal G}m^2}{R}
\eeq

\paragraph{Cohesive Failure Energy} 

The effect of cohesion on the total energy budget must also be considered. This does not show up in the amended potential, but is an internal non-gravitational strength that must be overcome in order for the mass distribution to actually undergo a change in its configuration. This can be represented as an additional amount of work that the system must perform in order to precipitate break-up, fission or reconfiguration. Here we follow the model of granular cohesion presented in \cite{sanchez_MAPS}, which can be represented as a constant force that acts on a boulder emplaced in the regolith. 

If the cohesive strength is represented as $\sigma_c$, then the force that must be applied to extract a boulder of radius $r$ will be $2\pi r^2 \sigma_c$ and the distance it should be displaced is at least $r$. Thus, the total work done on the body will be on the order of $2\pi r^3\sigma_c$, and is proportional to the volume of the boulder and the cohesive strength. This can be equated to the additional energy which must be supplied in order to fission the body, or $ E_C \sim 2\pi r^3\sigma_c$. 

This analysis is also consistent with the global failure of a rubble pile as modeled in \cite{sanchez_MAPS}. There, a conservative estimate of the necessary spin rate for a body to undergo plastic deformation and fission was given as 
\beq
	\omega^2 & \sim  \frac{{\cal G}m}{R^3} + \frac{\sigma_Y}{\rho R^2}
\eeq
where $\sigma_Y$ is the yield strength, $\rho$ is the bulk density, $m$ is the body mass and $R$ is its mean radius.
The first component in this spin rate corresponds to the spin rate for gravitational failure, while the second term corresponds to the additional spin needed to break the cohesive bonds. 
Multiplying the second term by the moment of inertia and dividing by $2$ gives the additional cohesive energy $E_C$, 
\beq
	 E_C & \sim & \frac{1}{5}\frac{m }{\rho } \sigma_Y 
\eeq
which again is just a function of the total volume of the disrupted body times the yield strength. 

Ultimately, this additional energy $E_C$ must be given to the system for it to undergo failure, and leads to a higher spin rate (and hence higher energy) when it does fail. Once fission occurs, however, the cohesive forces will vanish once the bodies lose contact and the additional energy (in the form of kinetic energy) that was provided to drive the failure will be available for the system's dynamical evolution. 

\subsection{Free Energy}
Once a given configuration fails, i.e., its total energy equals $E_{Fail}$, not all of this energy is available for working on the system and causing sub-components to escape. Specifically, the self-energy of the components of the $N$-body problem cannot be liberated under the weak action of the gravitational forces alone. The free energy is defined as the failure energy that is available to work on the evolving components, and is what will eventually lead to any disruption and loss of Hill Stability. The self energy has previously been defined as the sum of all self-potentials, and now denote this as $E_{Self} = {\cal U}_{Self}$. This energy is not available for disrupting the system. Thus the free energy is $E_{Free} = E_{Fail} - E_{Self}$ and is the energy available for doing ``work'' on the system and allowing components to escape. Ultimately, this energy will be compared to the mutual potential of the sub-components to determine whether or not a partition is Hill Stable. The subtraction of the self energy is trivial for a discrete $N$-body system, but needs to be handled more carefully for continuous distributions of matter, which are discussed later. 

\subsection{Disassociation Energy}
Finally the disassociation energy for a given partition can be defined. This is an energy tied to the chosen partition of material and independent of the initial configuration and failure energy of the system. Here the disassociation energy is defined as the total minimum mutual potential energy of a given partition, which is the total minimum gravitational potential minus the self energy (i.e., self potentials).  
\beq
E_D({\cal I}) & = & \sum_{\mathbb{I}_i\in{\cal I}} {\cal U}_{min}({\mathbb{I}_i}) - {\cal U}_{Self} \\
& = & {\cal U}_{min}({\cal I}) - {\cal U}_{Self} 
\eeq

Given a system in equilibrium that is spun up to fission or reconfiguration the total energy of the system at failure can be represented as $E = E_G + E_C = E_{Fail} = {\cal E}(\bfm{Q}^*)$, where $\bfm{Q}^*$ is the configuration of the system at a relative equilibrium. If the system splits into a partition ${\cal I}$ that subsequently disrupts, then the amended potential goes to ${\cal E}\rightarrow{\cal U}({\cal Q})$ and maintains the inequality $E \ge {\cal U}({\cal Q}) \ge {\cal U}_{min}({\cal I})$. For the last step, one just needs to subtract the self potential energy from each side, $E_{Self} = {\cal U}_{Self}$, which for a system that has disrupted provides the necessary condition for disruption, $E_{Free} \ge E_D({\cal I})$. The converse of this gives the Hill Stability condition for the partition:
\beq
	E_{Free} < E_D({\cal I}) 
\eeq
This is the fundamental comparison for whether or not a given partition is Hill Stable. Ultimately, the free energy is a function of the initial equilibrium configuration and overall cohesive energy, while the disassociation energy is a function of the partition. These are then linked together through the Hill Stability theorems, which provide a framework for mapping observed Hill Unstable asteroid clusters to possible initial configurations and failure energies. 

\section{Minimum Potentials and Disassociation Energies}

For some special situations or under some restrictive assumptions the minimum gravitational potential of a set of bodies can be evaluated exactly, and hence the disassociation energy of certain partitions can be computed. These are outlined in the following. The discussion is divided into three parts, for discrete bodies, discrete bodies taken to the continuum limit and size distributions. 


\subsection{Equal Mass Spheres}
Consider a collection of spherical bodies of equal mass. Their minimum gravitational potential energy will in general occur when they are resting on each other in such a way that the overall maximum distance between any two of them is minimized. If given a minimum energy configuration for $N$ bodies, this principle can be used to recursively construct a candidate minimum energy configuration at $N+1$. 

Assume the spherical bodies have a diameter $d$ and mass $m$. First consider a single body, or $N = 1$, then the minimum potential energy (minus its self potential) is zero. For two bodies, the minimum distance the bodies can be from each other is a distance $d$, and the minimum potential ${\cal U}_{min}(2) = -\frac{{\cal G} m^2}{d}$, again neglecting the self-potentials. For $N=3$ a body can be added that touches both of the previous bodies, thus yielding a minimum potential of ${\cal U}_{min}(3) = - 3 \frac{{\cal G} m^2}{d}$. For $N=4$ the additional body can again touch all of the other bodies, forming a tetrahedron with a total of 6 mutual contacts and thus ${\cal U}_{min}(4) = -6 \frac{{\cal G} m^2}{d}$. 

The next body at $N=5$ can no longer be placed such that it touches all of the other bodies. More specifically, if placed on the surface, the new body can only touch up to 3 other spheres, and thus there are now other spheres that are some distance away. In this case the 5th body is placed on the face of the tetrahedron, touching 3 of the bodies with its distance to the other body being $2\sqrt{2/3}d \sim 1.633\ldots \ d$. This leads to ${\cal U}_{min}(5) = -\left[9+\frac{1}{2}\sqrt{\frac{3}{2}}\right] \frac{{\cal G} m^2}{d}$. {\color{black} We hypothesize that this is the minimum energy configuration for $N=5$, and that for up to $N=8$ this construction -- placing the next particle on the open face of the tetrahedron -- yields the minimum. This is not proven, however, and is currently only a supposition. }
At $N=6$, the sixth body is added on one of the remaining open faces of the $N=4$ tetrahedra, with the distance from the grain in the base tetrahedron it is not touching as above, and its distance to the fifth body placed being $5/3 d$. This can be continued similarly up to $N=8$, beyond which the computation becomes too complex for the current analysis. Table \ref{tab:min} lists these energies up to $N=8${\color{black}, with the proviso that the energies for $N = 5\rightarrow8$ are not proven to be minimum}. 

For higher numbers of $N$, weak lower and upper bounds on the mutual potential can be found. At the lower end one can assume that all $N$ bodies touch each other, while at the upper end it can be assumed that each new body only interacts with the 3 bodies it touches. The corresponding limits are
\beq
	-\frac{1}{2} N (N-1)\frac{{\cal G} m^2}{d} \le {\cal U}_{min}(N) \le - 3(N-2) \frac{{\cal G} m^2}{d}
\eeq
where the lower bound is exact for $1\le N \le 4$ and the upper bound is only valid for $N\ge3$, and is exact for $N = 3,4$. 
The limits are not very sharp, however, and could easily be improved. 

\begin{table}[h]
\centering
\caption{Minimum energy configurations as a function of $N$, along with lower and upper bounds. The energies are all normalized by the factor ${\cal G} m^2 / d$. }
\begin{tabular}{c | c | c | c }
$N$ & Lower & ${\cal U}_{min}(N)$ & Upper \\
\hline
\hline
2 & -1 & -1 & $-$ \\
\hline
3 & -3 & -3 & -3 \\
\hline
4 & -6 & -6 & -6 \\
\hline
5 & -10  & $-\left[ 9 + \frac{1}{2}\sqrt{\frac{3}{2}}\right] = -9.61\ldots$& -9 \\
\hline
6 & -15 & $-\left[ 12 + \frac{3}{5} + \sqrt{\frac{3}{2}}\right] = -13.82\ldots$  & -12 \\
\hline
7 & -21 & $-\left[ 16 + \frac{4}{5} + \frac{3}{2}\sqrt{\frac{3}{2}}\right] = -18.64\ldots$ & -15 \\
\hline
8 & -28 & $-\left[ 21 + \frac{3}{5} + \sqrt{6}\right] = -24.05\ldots$ & -18 \\
\hline
\end{tabular}
\label{tab:min}
\end{table}

To calculate the disassociation energies for different partitions of $N$ requires one to identify all of the integer partitions and explicitly enumerate their energies. For a given global partition ${\cal I} = \left\{\mathbb{I}_1, \mathbb{I}_2, \ldots, \mathbb{I}_M\right\}$, define $N_i = \mbox{card}(\mathbb{I}_i)$ as the number of bodies in the $i$th sub-partition. From Theorem 1 the disassociation energy is
\beq
	E_D({\cal I}) & = & \sum_{i=1}^M {\cal U}_{min}(N_i)
\eeq
where the self-potentials are not included. 
Figure \ref{fig:configurations} presents all integer partitions from $N=2 \rightarrow 8$, utilizing a Young Diagram. Also shown is the disassociation energy for each configuration. The Young diagram can be read as follows. For a given $N$, the lower left corner starts with the bodies all in a single group, and the boxed number next to that is the minimum energy for that grouping, taken from Table \ref{tab:min}. Moving up the diagram, the different colors represent the $N$ bodies being separated into different global partitions. For these, the number to the left of the partition is its disassociation energy, meaning that the total energy of the initial $N$ body system gathered into one body must be raised to that level as a necessary condition for the partition to be Hill Unstable. Moving up and then to the right one sees that the disassociation energies increase until at an energy of 0 the entire system can be disaggregated, meaning that it is possible for each body to mutually escape from each other. In accordance with Corollary \ref{cor:disc} note that the minimum energy of the configuration with $N$ bodies is the lowest disassociation energy of the configuration with $N+1$ bodies.  Thus, in principle all disassociation energies for the $N=9$ case could be computed, even without knowing the minimum energy of that number. Also note that for $N\ge6$ there can be multiple partitions with the same disassociation energy. 

\subsection{Mass distributions with $N\gg 1$}

For systems with $N\gg1$ it becomes impossible to track all of the possible computations and configurations. Here one is pushed to use a notation that explicitly separates the bodies according to mass fraction, and to carefully account for porosity and the possible packing density of a set of discrete grains. 
When packed together in a single body, the total mass must equal $m$, however they cannot be packed together with the same overall density due to geometric effects. Rather, the density of the packed body, called the bulk density, is in general equal to $\rho_b = (1-\phi) \rho_g$, where $\phi$ is the porosity and is a measure of how closely packed the grains can be. For an assemblage of equal size spheres the porosity can range from 0.26 to 0.48. 

Consider the case of $N$ bodies each of mass $m/N$, with the total mass of the system being $m$. Assume that each of these bodies has a ``grain density'' of $\rho_g$, such that $m/N=4\pi/3 \rho_g r^3$, where $r$ is the radius of each grain. Thus, the total volume of the packed body must be larger. Define the equivalent sphere of the same mass as  $m = 4\pi/3 \rho_b R^3$. Then relating this back to the grain density and number of bodies one finds $R^3 (1-\phi) = N r^3$, or $R = \left[ N / (1-\phi) \right]^{1/3} r$. 
The resulting minimum total potential energy of the body is then modeled for a sphere as
\beq
	{\cal U}_{min} & = & - \frac{3{\cal G} m^2}{5 R}
\eeq
where this energy includes both the self-potential energy of the individual grains and the mutual potential between different grains. 
In the following the implications of this are worked out for a system of $N \gg 1$ equal masses, and for arbitrary distributions of them. 

\subsubsection{Self-Potential for Equal Mass Spheres when $N\gg 1$}

When computing the minimum potential energy for such a body, one must be careful to account for the porosity and not count the self-potentials of the constituent grains. For a body of mass $m$ and radius $R$, the minimum energy configuration is to be in a sphere, with a total potential of 
\beq
	{\cal U}_{Mutual} + {\cal U}_{Self} & = & {\cal U}_{min} 
\eeq
If the body is made up of $N$ equal sized spheres of mass $m/N$ and radius $r$, and has an overall porosity of $\phi$, then a correction to the above can be computed. The self potential of each grain then is ${\cal U}_g = - \frac{3{\cal G}}{5} \frac{m^2}{N^2 \ r} $. Furthermore, the relationship between $N$, $r$ and $R$ can be approximated as $R = \left(\frac{N}{1-\phi}\right)^{1/3} r$. Thus, the total self-potential of a grain in terms of the bulk properties is ${\cal U}_g = - \frac{3{\cal G}}{5} \frac{m^2}{N^2} \frac{N^{1/3}}{ R(1-\phi)^{1/3}}$, and the total self potential is this times $N$
\beq
	{\cal U}_{Self} & = & \frac{1}{N^{2/3}(1-\phi)^{1/3}} {\cal U}_{min}
\eeq
Thus the minimum energy mutual potential of the body can be represented as 
\beq
	{\cal U}_{Mutual} & = & {\cal U}_{min} \left[ 1 -  \frac{1}{N^{2/3}(1-\phi)^{1/3}}\right]
\eeq
Dividing by the self potential of the original body, ${\cal U}_{min}$, defines the correction factor that accounts for the self-potentials.
\beq
	\overline{\cal U}_{Mutual} & = & \left[ 1 -  \frac{1}{N^{2/3}(1-\phi)^{1/3}}\right] \label{eq:mutual}
\eeq

Figure \ref{fig:self} shows the normalized minimum mutual energy as a function of $N$ for extreme values of porosity. From this we see that even for 1000 bodies the corrected energy is about 1\% larger than the total energy, and for 10 bodies it is around 25\% larger, a sizable fraction. However, as $N$ grows large enough the self-potential correction term vanishes. 
\begin{figure}[ht!]
\centering
\includegraphics[scale = 0.25]{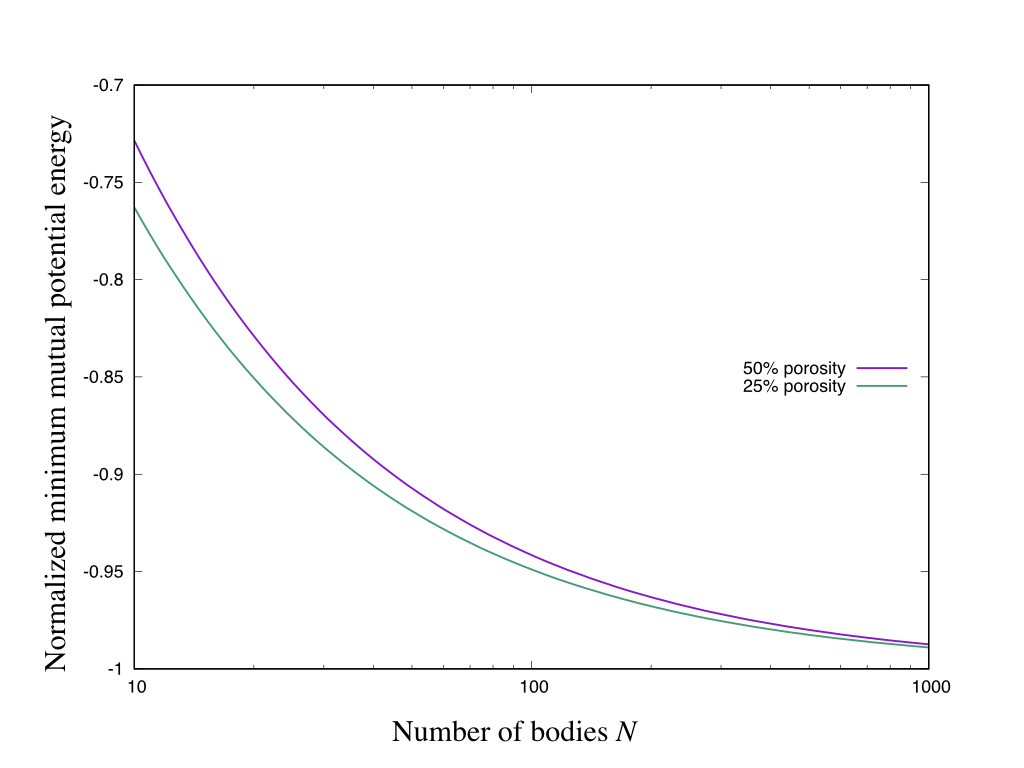}
\caption{Unit mutual potential energy as a function of $N$.}
\label{fig:self}
\end{figure}

\subsubsection{Disassociation Energies for $N\gg 1$}

Equation \ref{eq:mutual} accounts for the self-potentials of the discrete grains in the collection, however the collection of these grains themselves can be partitioned in many different ways. 
For the disassociation energy computation, we note that the minimum mutual potential with the self-potential correction is scaled by the overall minimum energy potential of the body and the same mass fraction scaling that is used in Corollary \ref{cor:cont} can be applied here. Thus, given a partition of the discrete mass elements ${\cal I}_M$, the disassociation energy can be computed as 
\beq
	E_D({\cal I}_M) & = & {\cal U}_{Mutual} \sum_{i\in {\cal I}_M} \mu_i^{5/3}
\eeq
When normalized by ${\cal U}_{min}$ this becomes 
\beq
	\overline{E}_D({\cal I}_M) & = & \sum_{i\in {\cal I}_M} \mu_i^{5/3} \left[ 1 -  \frac{1}{N^{2/3}(1-\phi)^{1/3}}\right]
\eeq
Here, is it important to distinguish between the number of partitions, $M$, and the total number of particles, $N$, where $M \le N$. 

It is instructive to work out a few specific examples for the disassociation energy. First, assume that the partition is $M$ equal mass components, so that each $\mu = 1/M$. Then the sum $\sum_{i\in {\cal I}} \mu_i^{5/3} = \frac{1}{M^{2/3}}$, and the energy required to disrupt the system will grow with $M$ as $E_D \propto - \frac{1}{M^{2/3}}$, and approaches zero as $M\rightarrow\infty$. 

If instead, the body is partitioned into a two components, the mass fractions will be $\mu$ and $(1-\mu)$ and the sum becomes $\sum_{i\in {\cal I}} \mu_i^{5/3} = \mu^{5/3} + (1-\mu)^{5/3}$. Here we see that the sum will be minimum at $\mu = 0.5$, equaling 0.63, and will take on a maximum equal to unity at $\mu = 0,1$. When converted to disassociation energy this is multiplied by $-1$, meaning that the equal mass system will have the largest disassociation energy and the infinitesimal mass the lowest. 

Finally, assume that the body is partitioned into a geometric series of the form $\mu_i = A \ w^i$, where $w < 1$ and $i = 1, 2, \ldots$. Here the tacit assumption is that $N\rightarrow \infty$ and thus that $M$ can be arbitrarily large. To satisfy the partition constraint it is required that $\sum_{i=1}^\infty A \ w^i = 1$. Recall the general result, $\sum_{i=1}^\infty w^i = \frac{w}{1-w}$. Thus the constraint requires that $A = \frac{1-w}{w}$, and thus the general form of the mass fraction is then $\mu_i = (1-w) \ w^{i-1}$. Then, the first and largest component of the system has a mass fraction $1-w$. The summation then becomes $\sum_{i\in {\cal I}} \mu_i^{5/3} = (1-w)^{5/3}\sum_{i=1}^\infty (w^{5/3})^{i-1} = \frac{(1-w)^{5/3}}{1-w^{5/3}}$. The limiting values of this function are unity for $w\rightarrow0$ and zero for $w\rightarrow 1$.

\subsubsection{Example Computations}
As an example, let us consider dividing a single body into a few different partitions and calculating their disassociation energies. We consider an initial body of a given mass and size, with the standard minimum potential ${\cal U}_{min} = -\frac{3{\cal G}m^2}{R}$ which we set to equal ${\cal U}_{min} = -1$. For definiteness we assume that $N\rightarrow\infty$, and thus that the total mutual potential energy correction factor is unity.  

Table \ref{tab:examples} shows the disassociation energy for a unit minimum potential for a number of different proposed partitions. These can all be viewed in the context of asteroid pairs, where we have systems of asteroids that have undergone some fission process and subsequently escaped from each other. The current population has been mostly viewed through the lens of a 2-body fission, however the current theory allows us to expand the possibilities. The table looks at a few energies for an $N=2$ disassociation, for $\mu = 0.5, 0.1, 0.01$. Here we clearly see that the needed energy for disassociation decreases as the mass fraction decreases to zero, and thus becomes easier to disassociate. Next the body is partitioned into $M$ equal masses, for $M = 2, 10, 100$. Here it can be noted that it takes considerably more energy to disassociate the body into a sequence of equal masses, with the necessary energy increasing with $M$. This implies that uniform disassociation of a body is more difficult than just losing a fraction of the total mass. Finally, a few infinite partitions of the body are considered, with mass ratios of the form $1/2^i$, $2 / 3^i$ and $9 / 10^i$. These systems have a decreasing disassociation energy with a decreasing $w$. 

\begin{table}[h]
\centering
\label{tab:examples}
\caption{Disassociation energies for a variety of configurations split into $M$ components. }
\begin{tabular}{ c | c | c }
$M$ & Mass Fractions & ${E}_D$ \\
\hline
\hline
2 & 0.5 & -0.63 \\
\hline
10 & $\mu = 0.1$ & -0.21 \\
\hline
100 & $\mu = 0.01$ & -0.05 \\
\hline
2 & 0.1, 0.9 & -0.86 \\
\hline
2 & 0.01, 0.99 & -0.98 \\
\hline
$\infty$ & $\mu_i = (1/2)^i$ & -0.46 \\
\hline
$\infty$ & $\mu_i = 2 \times (1/3)^i$ & -0.61 \\
\hline
$\infty$ & $\mu_i = 9 \times (1/10)^i$ & -0.86 \\
\hline
\end{tabular}
\end{table}


\subsection{Size Distributions}
Under some assumptions this theory can also be applied to the more general case of size distributions. The Appendix reviews size distributions for a general and two special cases which can be evaluated analytically. In general, define the cumulative size distribution by a function ${\cal N}(r)$ which gives the number of bodies in a collection of size greater than or equal to a radius $r$. The distribution is generally defined over an interval $r \in [r_0, r_1]$. Further, one can assume that there are a given number of boulders, ${\cal N}_1$, at the maximum size, or ${\cal N}(r_1) = {\cal N}_1$. Associated with the size distribution is the size distribution density, $n(r) = - d {\cal N}(r) / dr$. 

With such basic definitions the total mass of the system is 
\[
m = 4\pi/3 \rho_g \int_{r_0}^{r_1} r^3 \ n(r) \ dr
\]
where $\rho_g$ is the grain density. 
The same bulk density and porosity relationship can be assumed as in the previous section, so that $m = 4\pi/3 \rho_g (1-\phi) R^3$, which yields the equivalent spherical radius of the conglomeration as 
\[ R = \left[ \frac{1}{ (1-\phi)}\int_{r_0}^{r_1} r^3 \ n(r) \ dr   \right]^{1/3}
\]  
where $\phi$ is the porosity. 

The total self-potential for a distribution of grains will be 
\[ {\cal U}_{Self} = -\frac{3{\cal G}}{5} \left(\frac{4\pi\rho_g}{3}\right)^2 \int_{r_0}^{r_1}  r^5 n(r) \ dr
\] 
Then, following from the argument in the previous section it can be shown that 
\beq
	{\cal U}_{Mutual} + {\cal U}_{Self} & = & - \frac{3{\cal G}m^2}{5R} \\
	{\cal U}_{Mutual} & = & - \frac{3{\cal G}m^2}{5R} \left[ 1 - \frac{1}{R^5} \int_{r_0}^{r_1}  r^5 n(r) \ dr \right]
\eeq
which is also the minimum energy mutual potential. Then the normalized minimum mutual gravitational potential energy of the body is 
\beq
	\overline{\cal U}_{Mutual} & = & - \left[ 1 - \frac{1}{R^5} \int_{r_0}^{r_1}  r^5 n(r) \ dr \right]
\eeq
This will be evaluated for different size distributions and porosities. Note that the porosity appears implicitly only in the radius $R$, and that inserting the definition of that quantity the normalized energy becomes
\beq
	\overline{\cal U}_{Mutual} & = & - \left[ 1 - (1-\phi)^{5/3} \frac{\int_{r_0}^{r_1}  r^5 n(r) \ dr}{\left[\int_{r_0}^{r_1} r^3 \ n(r) \ dr\right]^{5/3}} \right]
\eeq
Here we note that, unlike the previous section, the self-potential correction does not vanish. 

Size distributions of the form ${\cal N}_{\alpha}(r) = {\cal N}_1 \left(\frac{r_1}{r}\right)^\alpha$, $2 \le \alpha \le 3$ are considered, which cover the range of size distributions observed for asteroidal bodies \cite{michikami, tsuchiyama_science}. In Table \ref{tab:dist} the total radius and normalized mutual potential for these different size distributions are summarized. The basic computations are covered in the Appendix. 

\begin{table}[h]
\centering
\caption{Radius and normalized minimum mutual potential for $2 \le \alpha \le 3$.}
\bigskip
\label{tab:dist}
\begin{tabular}{ c | c | c }
 $\alpha$ & $R$ & $\overline{\cal U}_{Mutual}$ \\
 \hline
\hline
& & \\
 2 & 
 $ r_1 \left[ \frac{2 {\cal N}_1 \left(1-\frac{r_0}{r_1}\right)}{1-\phi}\right]^{1/3}$ & 
 $ -\left[1-\frac{(1-\phi)^{5/3}}{3(2{\cal N}_1)^{2/3}} \frac{1-\left(\frac{r_0}{r_1}\right)^3}{\left[1-\frac{r_0}{r_1}\right]^{5/3} }\right]$ 
 \\
& &  \\
 \hline
& &  \\
 $2\le \alpha<3$ &  
 $ r_1 \left[ \frac{\alpha {\cal N}_1 \left[1-\left(\frac{r_0}{r_1}\right)^{3-\alpha}\right]}{(3-\alpha) (1-\phi)}\right]^{1/3}$ & 
 $ -\left[1-\frac{(1-\phi)^{5/3}(3-\alpha)^{5/3}}{(5-\alpha)(\alpha {\cal N}_1)^{2/3}} \frac{1-\left(\frac{r_0}{r_1}\right)^{5-\alpha}}{\left[1-\left(\frac{r_0}{r_1}\right)^{3-\alpha}\right]^{5/3} }\right]$  
 \\
& & \\
\hline
& &  \\
 3 & 
 $ r_1 \left[ \frac{3 {\cal N}_1 \ln\left(\frac{r_1}{r_0}\right)}{1-\phi}\right]^{1/3}$ & 
 $ -\left[1-\frac{(1-\phi)^{5/3}}{2(3 {\cal N}_1)^{2/3}} \frac{1-\left(\frac{r_0}{r_1}\right)^{2}}{\left[\ln\left(\frac{r_1}{r_0}\right)\right]^{5/3} }\right]$  
 \\
& & \\
\end{tabular}
\end{table}

In the ${\cal N}_\alpha$ distributions for $2 \le \alpha < 3$ the limit $r_0\rightarrow0$ can be taken without any singularities. For definiteness, take ${\cal N}_1 = 1$. This yields 
\beq
	\overline{\cal U}_{{Mutual}_\alpha} & = & -\left[1-\frac{(1-\phi)^{5/3}(3-\alpha)^{5/3}}{(5-\alpha)\alpha^{2/3}} \right]
\eeq
Figure \ref{fig:unorm} shows this potential as a function of size parameter $\alpha$ over the interval $2\le \alpha < 3$ and for different porosities. For low porosity one can note that the relative importance of the self potential rises, up to a 20\% increase in the minimum mutual potential at $\alpha=2$. For higher porosities, the relative contribution of the self potential becomes more muted, and at a porosity of 50\% leads to a $\sim7\%$ increase in minimum mutual potential at $\alpha=2$. The relative contributions all decrease for increasing $\alpha$. 

It is also important to consider how the total  mutual potential ${\cal U}_{min}$ also changes as a function of $\alpha$ and $\phi$. Again, taking $r_0 \rightarrow 0$ (for $\alpha<3$), setting ${\cal N}_1 = 1$, the scaling potential term becomes 
\beq
	{\cal U}_{min_\alpha} & = & - \frac{3 {\cal G} m^2}{5 r_1} \left[ \frac{(3-\alpha)(1-\phi)}{\alpha}\right]^{1/3}
\eeq
Figure \ref{fig:mutual} shows this as a function of $\alpha$ and $\phi$, scaled by the quantity $\frac{3 {\cal G} m^2}{5 r_1}$. Here there is a large change as a function of $\alpha$, with the steeper size distributions (i.e., larger $\alpha$) having a much higher overall energy. This means that, for a given fixed total mass $m$, the rubble piles with a steeper size distribution will be more difficult to disrupt. 

At $\alpha=3$ the minimum mutual potential energy takes on a different form, and one cannot let $r_0\rightarrow0$ in general. Instead, again let ${\cal N}_1 = 1$ but now consider a size scale of $r_1 / r_0 \sim 10^3 \rightarrow 10^6$, which bounds what was confirmed on Itokawa \cite{michikami, tsuchiyama_science} and assures that the correction term is negligible. 
The normalized and scaling potentials are then 
\beq
	\overline{\cal U}_{{Mutual}_3} & = & -\left[1-\frac{(1-\phi)^{5/3}}{2 \ 3^{2/3}} \left[\ln\left(\frac{r_1}{r_0}\right)\right]^{-5/3}\right] \\
	\overline{\cal U}_{min_3} & = & - \left[ \frac{1-\phi}{3  \ln\left(\frac{r_1}{r_0}\right)} \right]^{1/3}
\eeq
and the $\ln(r_1/r_0)$ term then takes on values from $6.9\rightarrow 13.8$, and thus can be approximated as $\sim 10$. 
For a porosity of $\phi = 0$ the relative potential is $\sim -0.995$ and the scaling potential is -0.32. Increasing the porosity to 50\% gives the relative potential to be $\sim -0.998$ and the scaling potential to be -0.25. Thus, at $\alpha=3$ the effect of the self potential on the overall minimum mutual potential is relatively small, however the scaling potential has significant variations due to porosity. 

 \begin{figure}[ht!]
\centering
\includegraphics[scale = 0.25]{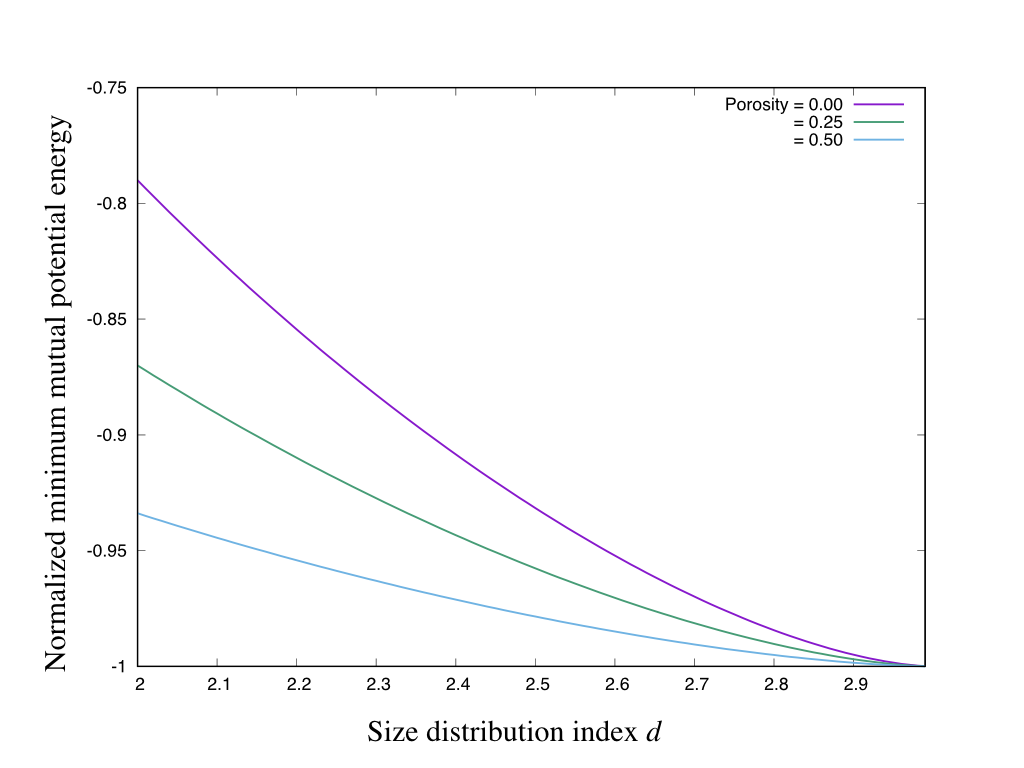}
\caption{Minimum mutual potential energy for size distributions as a function of $\alpha$ over the interval $[2,3)$.}
\label{fig:unorm}
\end{figure}

 \begin{figure}[ht!]
\centering
\includegraphics[scale = 0.25]{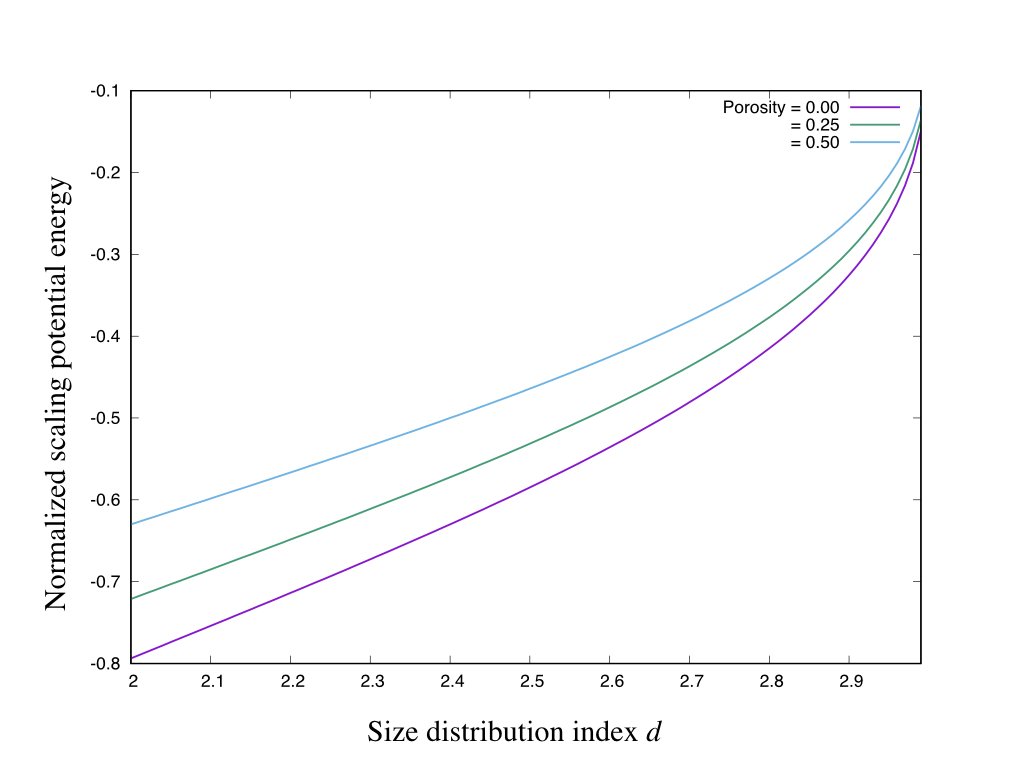}
\caption{Scaling potential energy for size distributions as a function of $\alpha$ over the interval $[2,3)$.}
\label{fig:mutual}
\end{figure}

It is important to note that the minimum energy potentials for a size distribution are independent of the total number of grains, and thus they give a systematic correction to the disassociation energies. Following from the previous section, for a given partition of the original body, ${\cal I}$, the associated disassociation energy is now
\beq
	E_D({\cal I}) & = & {\cal U}_{{Mutual}_\alpha} \sum_{i\in{\cal I}} \mu_i^{5/3}
\eeq

This formula ignores the delicate issue of splitting size distributions into smaller mass fractions. For a given distribution with ${\cal N}_1$ boulders of size $r_1$, it is an open question how these should be divided if the size distribution is divided into sub-partitions. To be more precise, it would be necessary to identify the largest boulder within any given sub-partition, and then reconstitute the size distribution for that sub-component. However, unless the largest boulders are divided up proportionate to the total mass divisions, this will cause some of the size distributions to deviate from the ideal forms given here. We recognize and identify this as an issue here, but do not pursue it any further. 
Conveniently ignoring this aspect, these mutual potentials can then be used to compute disassociation energies as outlined above. 

An important point should be made comparing the disassociation energies of the different size distributions. If two parent bodies with the same mass and radius but different porosities are compared, the relative disassociation energies are defined in Fig.\ \ref{fig:unorm}. As the porosity is increased and as the size distribution is made steeper, the energy decreases, meaning that a more porous body with a steeper size distribution will be relatively easier to disaggregate into any given partition. If instead, the total mass of the asteroid is fixed along with its largest boulder, $r_1$, the trend is different for the overall scaling mutual potential, as shown in Fig.\ \ref{fig:mutual}, which shows an increase in energy with increasing porosity and with steeper size distributions, however this does not account for the changes shown in Fig.\ \ref{fig:unorm}. Multiplying these two together provides a better handle on the total mutual potential, shown in Fig.\ \ref{fig:joint}. The direct comparison is for asteroids of the same mass and with the same largest size boulder, however the total sizes of these bodies may change drastically as a function of size distribution and porosity. Here it is clear that the overall trend is for the disassociation energy to increase with increasing porosity and steepness. 

 \begin{figure}[ht!]
\centering
\includegraphics[scale = 0.25]{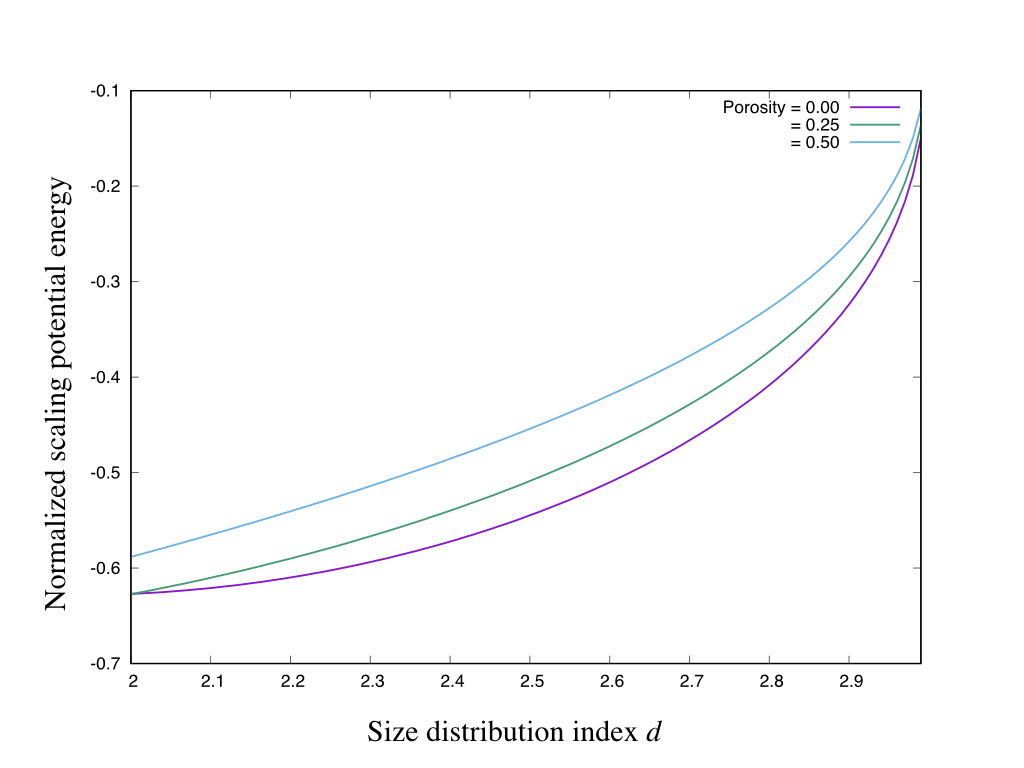}
\caption{Scaling potential energy times normalized mutual potential for size distributions as a function of $\alpha$ over the interval $[2,3)$ and for different porosities.}
\label{fig:joint}
\end{figure}

\section{Conclusions}
This paper presents a series of definitions and basic theorems that can be used to calculate the disassociation energy of an $N$ body problem into different sets of asteroid clusters. These results are developed for the discrete mass problem, the continuum limit of this problem, and for the more general case of size distributions. Resulting are rigorous results for the level of energy required to produce asteroid clusters from rubble pile asteroids.

\section*{Appendix: Size Distribution Functions}

Consider a cumulative size distribution of the form ${\cal N}_{\alpha}(r) = \frac{A_\alpha}{r^\alpha}$ for $2 \le \alpha \le 3$. Associated with this distribution is a maximum and minimum grain radius, $r_1$ and $r_0$, respectively. 
The function ${\cal N}_{\alpha}(r)$ is the cumulative number of particles with radius between $r$ and the maximum size $r_1$. The term $A_\alpha$ is initially chosen to agree with the observed number of largest boulders, ${\cal N}_1$, such that ${\cal N}_{\alpha}(r_1)={\cal N}_1$. With this interpretation, the nominal form for the function is:
\beq
	{\cal N}_{\alpha}(r) & = & {\cal N}_1 \left(\frac{r_1}{r}\right)^\alpha
\eeq

The cumulative distribution is the integral of a cumulative density function $n_\alpha(r)$, defined as:
\beq
	{\cal N}_{\alpha}(r) & = & \int_{r}^{r_1} n_\alpha(r) \ dr
\eeq
This definition establishes that $n_\alpha(r) = - \frac{d{\cal N}_{\alpha}}{dr}$, leading to the cumulative density function
\beq
	n_\alpha(r) & = & \frac{\alpha {\cal N}_1 \ r_1^\alpha}{r^{\alpha+1}}
\eeq
A density distribution function that integrates to unity can also be defined, denoted as $\bar{n}_\alpha(r)$:
\beq
	\bar{n}_\alpha(r) & = & \frac{n_\alpha(r)}{\int_{r_0}^{r_1} n_\alpha(r) \ dr}
\eeq
Carrying out this computation yields 
\beq
	\bar{n}_\alpha(r) & = & \frac{\alpha r_1^\alpha r_0^\alpha}{(r_1^\alpha - r_0^\alpha) r^{\alpha+1}} 
\eeq

There are several quantities of interest that can be defined and calculated with a power law size distribution. A few of them are reviewed here, in addition to stating some key results used in the paper. 

\paragraph{Mean Grain Radius}
The mean grain radius is defined as
\beq
	\bar{r} & = & \int_{r_0}^{r_1} r \bar{n}_\alpha(r) \ dr \\
	& = &  \frac{\alpha r_1 r_0}{\alpha-1}\frac{ r_1^{\alpha-1} - r_0^{\alpha-1}}{r_1^\alpha - r_0^\alpha}  
\eeq
Thus if $r_0 \ll r_1$ the mean radius is $\bar{r} \sim \frac{\alpha}{\alpha-1} r_0$. 

\paragraph{Surface Area of Grains}

The total surface area of a collection of grains is computed as
\beq
	{SA}_T & = & \int_{r_0}^{r_1} 4\pi r^2 {n_\alpha}(r) \ dr \\
	& = & 4\pi {\cal N}_1 \alpha r_1^\alpha \int_{r_0}^{r_1} r^{1-\alpha} dr
\eeq
If $2 < \alpha \le 3$ this can be integrated to find
\beq
	{SA}_T & = & \frac{4\pi {\cal N}_1 \alpha }{\alpha-2} r_1^2 \left[ \left(\frac{r_1}{r_0}\right)^{\alpha-2} - 1\right]   
\eeq
and if $\alpha = 2$ the total surface area equals
\beq
	{SA}_T & = & 8\pi {\cal N}_1 r_1^2 \ln\left(\frac{r_1}{r_0}\right)
\eeq
For either case, if $r_0 \ll r_1$, the total surface area becomes arbitrarily large.

\paragraph{ Volume of Grains}

The total volume of grains can be found by 
\beq
	{V}_T & = & \int_{r_0}^{r_1} \frac{4\pi}{3} r^3 {n_\alpha}(r) \ dr \\
	& = & \frac{4\pi \ \alpha {\cal N}_1 }{3} r_1^\alpha  \int_{r_0}^{r_1}  r^{2-\alpha} \ dr
\eeq
If $2 \le \alpha < 3$ the total volume equals
\beq
		{V}_T & = & \frac{4\pi}{3} \frac{\alpha {\cal N}_1 r_1^3}{3-\alpha} \left[1 - \left(\frac{r_0}{r_1}\right)^{3-\alpha}\right]
\eeq
If $\alpha = 3$ the total volume equals 
\beq
	{V}_T & = & 4\pi {\cal N}_1 r_1^3 \ln\left(\frac{r_1}{r_0}\right)
\eeq
For $\alpha < 3$ one can take the limit $r_0\rightarrow\infty$ without any singularity. For $\alpha=3$, however, this leads to an infinite mass. 

\paragraph{Total Self Potential}

Finally, the total self potential of a size distribution, assuming spherical grains, is computed as
\beq
	{\cal U}_{Self} & = & -\frac{3{\cal G}}{5} \left(\frac{4\pi\rho_g}{3}\right)^2 \int_{r_0}^{r_1}  r^5 n_\alpha(r) \ dr \\
	& = & -\frac{3{\cal G}}{5} \left(\frac{4\pi\rho_g}{3}\right)^2 {\alpha {\cal N}_1 r_1^\alpha} \int_{r_0}^{r_1}  r^{4-\alpha} \ dr
\eeq
The integral is defined for the whole interval of $2\le \alpha \le 3$, yielding 
\beq
	{\cal U}_{Self} & = & -\frac{3{\cal G}}{5} \left(\frac{4\pi\rho_g}{3}\right)^2 \frac{\alpha {\cal N}_1 r_1^5}{5-\alpha} \left[ 1 - \left(\frac{r_0}{r_1}\right)^{5-\alpha}\right] 
\eeq
Across the entire interval the limit $r_0\rightarrow0$ can be taken.

\paragraph{Acknowledgements} { The author appreciates the comments of the two reviewers, which have helped to greatly improve this paper.}

\paragraph{Conflict of Interest Statement} { The author declares that no conflict of interest exists with the research reported herein.}


\begin{thebibliography}{10}

\bibitem{andrews1998theory}
George~E Andrews.
\newblock {\em The theory of partitions}.
\newblock Number~2. Cambridge university press, 1998.

\bibitem{bell_number}
Daniel Berend and Tamir Tassa.
\newblock Improved bounds on bell numbers and on moments of sums of random
  variables.
\newblock {\em Probability and Mathematical Statistics}, 30(2):185--205, 2010.

\bibitem{toshi_1950DA}
M.~Hirabayashi and D.J. Scheeres.
\newblock {Stress and Failure Analysis of Rapidly Rotating Asteroid (29075)
  1950 DA}.
\newblock {\em The Astrophysical Journal Letters}, 798(1):L8, 2015.

\bibitem{holsapple_original}
KA~Holsapple.
\newblock Equilibrium configurations of solid cohesionless bodies.
\newblock {\em Icarus}, 154(2):432--448, 2001.

\bibitem{michikami}
T.~Michikami, A.M. Nakamura, N.~Hirata, R.W. Gaskell, R.~Nakamura, T.~Honda,
  C.~Honda, K.~Hiraoka, J.~Saito, H.~Demura, et~al.
\newblock {Size-frequency statistics of boulders on global surface of asteroid
  25143 Itokawa}.
\newblock {\em Earth, Planets, and Space}, 60(1):13--20, 2008.

\bibitem{moeckel2017minimal}
Richard Moeckel.
\newblock Minimal energy configurations of gravitationally interacting rigid
  bodies.
\newblock {\em Celestial Mechanics and Dynamical Astronomy}, 128(1):3--18,
  2017.

\bibitem{moeckel2018counting}
Richard Moeckel.
\newblock Counting relative equilibrium configurations of the full two-body
  problem.
\newblock {\em Celestial Mechanics and Dynamical Astronomy}, 130(2):17, 2018.

\bibitem{Pravec_clusters}
P.~Pravec, P.~Fatka, D.~Vokrouhlický, D.J. Scheeres, P.~Kušnirák,
  K.~Hornoch, A.~Galád, J.~Vraštil, D.P. Pray, Yu.N. Krugly, N.M. Gaftonyuk,
  R.Ya. Inasaridze, V.R. Ayvazian, O.I. Kvaratskhelia, V.T. Zhuzhunadze,
  M.~Husárik, W.R. Cooney, J.~Gross, D.~Terrell, J.~Világi, L.~Kornoš, Š.
  Gajdoš, O.~Burkhonov, Sh.A. Ehgamberdiev, Z.~Donchev, G.~Borisov, T.~Bonev,
  V.V. Rumyantsev, and I.E. Molotov.
\newblock Asteroid clusters similar to asteroid pairs.
\newblock {\em Icarus}, 304:110 -- 126, 2018.
\newblock Asteroids and Space Debris.

\bibitem{pravec_fission}
P.~Pravec, D.~Vokrouhlick{\`y}, D.~Polishook, D.J. Scheeres, AW~Harris,
  A.~Gal{\'a}d, O.~Vaduvescu, F.~Pozo, A.~Barr, P.~Longa, et~al.
\newblock Formation of asteroid pairs by rotational fission.
\newblock {\em Nature}, 466(7310):1085--1088, 2010.

\bibitem{sanchez_MAPS}
P.~S{\'a}nchez and D.J. Scheeres.
\newblock The strength of regolith and rubble pile asteroids.
\newblock {\em Meteoritics \& Planetary Science}, 49(5):788--811, 2014.

\bibitem{scheeres_F2BP}
D.J. Scheeres.
\newblock Stability in the full two-body problem.
\newblock {\em Celestial Mechanics and Dynamical Astronomy}, 83(1):155--169,
  2002.

\bibitem{scheeres_icarus_fission}
D.J. Scheeres.
\newblock Rotational fission of contact binary asteroids.
\newblock {\em Icarus}, 189(2):370--385, 2007.

\bibitem{scheeres_minE}
D.J. Scheeres.
\newblock {Minimum Energy Configurations in the $N$-Body Problem and the
  Celestial Mechanics of Granular Systems}.
\newblock {\em Celestial Mechanics and Dynamical Astronomy}, 113(3):291--320,
  2012.

\bibitem{F4BP_chapter}
D.J. Scheeres.
\newblock {Relative equilibria in the full $N$-body problem with applications
  to the equal mass problem}.
\newblock In {\em Recent Advances in Celestial and Space Mechanics}, pages
  31--81. Springer, 2016.

\bibitem{F3BP_scheeres}
D.J. Scheeres.
\newblock Relative equilibria in the spherical, finite density three-body
  problem.
\newblock {\em {Journal of Nonlinear Science}}, 26(5):1445--1482, 2016.

\bibitem{tsuchiyama_science}
A.~Tsuchiyama, M.~Uesugi, T.~Matsushima, T.~Michikami, T.~Kadono, T.~Nakamura,
  K.~Uesugi, T.~Nakano, S.A. Sandford, R.~Noguchi, et~al.
\newblock {{Three-Dimensional Structure of Hayabusa Samples: Origin and
  Evolution of Itokawa Regolith}}.
\newblock {\em Science}, 333(6046):1125--1128, 2011.

\bibitem{YORP_AIV}
David Vokrouhlicky, William~F Bottke, Steven~R Chesley, Daniel~J Scheeres, and
  Thomas~S Statler.
\newblock The Yarkovsky and YORP effects.
\newblock {\em Asteroids IV. Univ. Arizona Press, Tucson}, 2015.

\end{thebibliography}

\end{document}